\def\BibTeX{{\rm B\kern-.05em{\sc i\kern-.025em b}\kern-.08em
		T\kern-.1667em\lower.7ex\hbox{E}\kern-.125emX}}
\theoremstyle{definition}
\newtheorem{thm}{\textbf{Theorem}}
\long\def\symbolfootnote[#1]#2{\begingroup%
	\def\thefootnote{\fnsymbol{footnote}}\footnote[#1]{#2}\endgroup}
\newcommand{\beq}{\begin{equation}}
\newcommand{\eeq}{\end{equation}}
\newcommand{\beqa}{\begin{eqnarray}}
\newcommand{\eeqa}{\end{eqnarray}}
\tikzset{
	startstop/.style={
		rectangle, 
		rounded corners,
		minimum width=3cm, 
		minimum height=0.5cm,
		align=center, 
		draw=black, 
	%	fill=red!30
	},
	process/.style={
		rectangle, 
		minimum width=3cm, 
		minimum height=0.5cm, 
		align=center, 
		draw=black, 
	%	fill=blue!30
	},
	decision/.style={
		rectangle, 
		minimum width=3cm, 
		minimum height=0.5cm, align=center, 
		draw=black, 
	%	fill=green!30
	},
	arrow/.style={thick,->,>=stealth},
	dec/.style={
		ellipse, 
		align=center, 
		draw=black, 
	%	fill=green!30
	},
}
\def\BState{\State\hskip-\ALG@thistlm}
\newcommand{\widesim}[2][1.5]{
  \mathrel{\overset{#2}{\scalebox{#1}[1]{$\sim$}}}
}
\newcommand{\removelatexerror}{\let\@latex@error\@gobble}
\newcommand{\RNum}[1]{\uppercase\expandafter{\romannumeral #1\relax}}
\def\BibTeX{{\rm B\kern-.05em{\sc i\kern-.025em b}\kern-.08em
    T\kern-.1667em\lower.7ex\hbox{E}\kern-.125emX}}
\title{Channel State Information Based User Censoring in Irregular Repetition Slotted Aloha \vspace{-2.5mm}
}
\author{\IEEEauthorblockN{Chirag Ramesh Srivatsa, \IEEEmembership{Graduate Student Member, IEEE,} and Chandra R. Murthy, \IEEEmembership{Fellow, IEEE}}
\IEEEauthorblockA{{Dept. of CPS and Dept. of ECE,  Indian Institute of Science, Bengaluru, India (e-mail: \{chiragramesh, cmurthy\}@iisc.ac.in).}}  %\vspace{-25mm}
}
\begin{document}
\maketitle
%\vspace{-25mm}
\begin{abstract}
Irregular repetition slotted aloha (IRSA) is a massive random access protocol which can be used to serve a large number of users while achieving a packet loss rate (PLR) close to zero.
However, if the number of users is too high, then the system is interference limited and the PLR is close to one.
In this paper, we propose a variant of IRSA in the interference limited regime, namely Censored-IRSA (C-IRSA), wherein users with poor channel states censor themselves from transmitting their packets.
We theoretically analyze the throughput performance of C-IRSA via density evolution.
Using this, we derive closed-form expressions for the optimal choice of the censor threshold which maximizes the throughput while achieving zero PLR among uncensored users.
Through extensive numerical simulations, we show that C-IRSA can achieve a $4\times$ improvement in the peak throughput compared to conventional IRSA. 
\end{abstract}
\begin{IEEEkeywords}
Irregular repetition slotted aloha,  massive machine-type communications,  user censoring,  random access
\end{IEEEkeywords}

\vspace{-0.2cm}
\section{Introduction} \label{sec_intro}
Massive machine-type communications (mMTC) is an evolving use-case, expected to serve about a million users per square km~\cite{ref_chen_jsac_2021}.
In this context, irregular repetition slotted aloha (IRSA)~\cite{ref_liva_toc_2011} is a distributed massive random access protocol which has received much attention in the literature~\cite{ref_srivatsa_chest_tsp_2022,ref_srivatsa_uad_tsp_2022}.
The performance of IRSA mainly depends on the system load, i.e., the number of users participating in the protocol per frame. 
At low system loads,  the system is not interference-limited, and the packet loss rate (PLR) is close to zero.
As the system load increases beyond a so-called \emph{inflection load}, IRSA becomes interference limited, and the system throughput rapidly drops to zero, with PLR approaching one~\cite{ref_srivatsa_chest_tsp_2022}.
In this paper, we address the issue of the poor throughput of IRSA in the high load regime by introducing a \emph{distributed self-censoring scheme,} which allows the system to maintain the throughput at the maximum possible value even as the load increases.

In IRSA, users transmit packet replicas in multiple randomly selected slots (within a frame) to a base station (BS); the latter decodes the packets using successive interference cancellation (SIC)~\cite{ref_paolini_tit_2015}. 
If the BS successfully decodes a user in a slot, it uses the decoded data to perform SIC in all other slots in which that user has transmitted a replica.
The decodability of any user in IRSA depends on the signal to interference plus noise ratio (SINR) of that user~\cite{ref_khaleghi_pimrc_2017}.
If the users have poor channel states or there are many collisions in a slot resulting in increased multi-user interference (MUI), then the SINR decreases, 
leading to users not getting decoded.

When the system load is low, all users with sufficiently good channel states can be decoded at the end of the SIC process~\cite{ref_clazzer_icc_2017}.
However, as the load increases, more collisions lead to the system becoming interference limited, and the throughput quickly drops close to zero~\cite{ref_paolini_tit_2015}.
To tackle this issue, in this paper, we propose a modified IRSA protocol, as follows. 
The BS transmits a pilot signal at the start of each frame, using which the users estimate their channel state information (CSI). 
Then, users with poor CSI  \emph{self-censor}, i.e., they refrain from transmitting, thereby reducing collisions and enable the decoding of the transmissions from the active users to succeed.
Here, choosing too high a (CSI-based) censor threshold leads to very few users participating in the IRSA protocol, while too low a threshold leads to too many collisions and the system becoming interference-limited. 
Thus, given the system load, there is an optimal threshold that maximizes the throughput. 
The censor threshold is calculated by the BS based on the load and the SNR, and its value is periodically broadcast to the users.
Note that this approach retains the fully distributed nature of IRSA.

IRSA has been studied for the collision channel~\cite{ref_liva_toc_2011}, with multiple antennas~\cite{ref_srivatsa_chest_tsp_2022},  with activity detection~\cite{ref_srivatsa_uad_tsp_2022},  with path loss~\cite{ref_khaleghi_pimrc_2017}, for the Rayleigh fading channel~\cite{ref_clazzer_icc_2017}, with channel estimation errors~\cite{ref_srivatsa_spawc_2019},  and with multi-cell effects~\cite{ref_srivatsa_spawc_2022}.
Density evolution has been used to characterize the asymptotic throughput of IRSA~\cite{ref_srivatsa_chest_tsp_2022,
ref_clazzer_icc_2017,ref_khaleghi_pimrc_2017}.
Variants of aloha such as $K$-repetition have also been  studied~\cite{ref_choi_wcl_2021, ref_ding_cscn_2021}.
However, none of these papers address the dramatic increase in PLR and the corresponding reduction in throughput as the system load increases, which is our focus in this paper. 
Our specific contributions are as follows: 
\begin{enumerate}
\item We propose a censored-IRSA (C-IRSA) protocol in the interference limited regime, where users with poor CSI self-censor to decrease the effective system load, thereby enabling the uncensored users to be decoded at the BS.
\item We theoretically analyze the asymptotic performance of C-IRSA using density evolution.
\item We provide the optimal choice of the censor threshold with which the PLR of uncensored users can be driven close to zero at all system loads, while maintaining the throughput of the system at its highest value. 
\end{enumerate}

With CSI-based censoring in C-IRSA, we can achieve a $4\times$ improvement in the peak performance of the system compared to conventional IRSA.
Further, C-IRSA can be operated at the peak performance for all system loads, whereas the throughput of conventional IRSA becomes zero at high loads.

\textit{Notation:} The symbols $a$, $ \mathbf{a}$, and $\mathbf{A}$, denote a scalar, a vector, and a matrix, respectively. 
$[N]$ denotes the set $\{1,2,\ldots,N\}$.
$\mathbbm{1}\{ \cdot \}$, $|\cdot|$, $ [\cdot]^*$, and $\mathop{{}\mathbb{E}}[\cdot]$,  denote the indicator, magnitude (or cardinality of a set),  conjugate, and expectation, respectively.

\vspace{-0.2cm}
\section{System Model} \label{sec_sys_model}
We consider an IRSA system with $M$ single-antenna users communicating with a central BS over a frame consisting of $T$ slots.
The BS is located at the cell center, and the users are arbitrarily located within the cell.
mMTC applications use similar settings as narrowband internet of things, which uses a narrow bandwidth of $180$~kHz~\cite{ref_chen_jsac_2021}. 
Over this band, the channel can be assumed to be flat and Rayleigh block fading.
The BS allocates a pre-specified band to all the users in the system and the $M$ users transmit their packets within this band. 
The \emph{system load,} $L$, is defined as the ratio of the number of users to the number of slots, $L\! \triangleq \!M/T$. 
In conventional IRSA,  in each frame, the users randomly select a subset of the slots, and transmit replicas of their packets in the chosen slots.
The access of the $T$ slots in a given frame can be represented as a \emph{binary access pattern matrix} $\mathbf{G} \!\in\! \{0,1\}^{T \times M}$~\cite{ref_srivatsa_chest_tsp_2022}.
The $(t,m)$th element of $\mathbf{G}$, denoted by $g_{tm}$, equals $1$ if the $m$th user transmits its packet in the $t$th slot, and $g_{tm} \!=\! 0$ otherwise.
In such a protocol, when $L$ is high, there will be too many collisions in any slot, leading to a failure of the SIC-based decoding process (described below), resulting in low throughput.

The C-IRSA protocol we propose works as follows. 
At the start of each frame, the BS transmits a pilot signal, using which the users estimate their channel state. 
The users participate in the IRSA protocol if and only if the magnitude squared of their channel exceeds a censor threshold denoted by $\nu$. 
We refer to the users who self-censor  as \emph{inactive} or \emph{censored} users, and the other users as \emph{active} or \emph{uncensored} users. 
A censored user can sleep till the next time it has data to transmit, by when its channel state would change. 
At the BS, the active users' packets are decoded using the SIC process as with the conventional IRSA protocol.

The $m$th user transmits a symbol $x_m$ with $\mathbb{E}[x_m] \!=\! 0$ and $\mathbb{E}[|x_m|^2] \!=\! 1$. 
The received signal $y_{t} $ at the BS in the $t$th slot~is
\begin{align}
y_{t}  &=  \textstyle{\sum\nolimits_{m=1}^M} \sqrt{\rho_0}  a_{m} g_{tm} h_{m} x_m  + n_{t}, \label{eqn_init_rx_data}
\end{align}
where $h_{m} \!\widesim[1.5]{\text{i.i.d.}}\! \mathcal{CN} (0, 1) \ \forall m \in [M]$ is the uplink fading channel of the $m$th user, \textcolor{black}{assumed independent across users and frames};  $a_{m} \!=\! \mathbbm{1}\{ |h_{m}|^2 \!\geq\! \nu \} $ is the activity coefficient of the $m$th user,  and $n_{t}\! \widesim[1.5]{\text{i.i.d.}}\! \mathcal{CN}(0,1)$ is the complex AWGN at the BS. Also, $\rho_0 \!\triangleq \!P \sigma_{\tt{h}}^2/N_0$ denotes the signal to noise ratio (SNR) of any user (in the absence of any collisions), where $P$ is the transmit power, $\sigma_{\tt{h}}^2$ is the fading variance, and $N_0$ is the noise variance. 
Here, we assume that the users perform path loss inversion based power control to ensure the same average received power levels of all users, which in turn ensures fairness.
In addition, inverting only the path loss rather than full channel inversion helps with \emph{capture effect}, which allows some of the users to be decoded in slots where there are collisions,  improving the throughput~\cite{ref_srivatsa_chest_tsp_2022}.
We denote the set of active users by $\mathcal{A}\! \triangleq\! \{ i \in [M] | |h_i|^2 \!\geq\! \nu \}$, the number of active users by $M_a\! \triangleq\! |\mathcal{A} |$, and the \emph{active load} $L_a$ by~$L_a \!\triangleq \! M_a/T $.

%\vspace{-0.3cm}
\subsubsection{SIC-based Decoding}\label{sec_sic_dec}
The BS iteratively processes the received signal.
In each slot, the BS attempts to decode the users' packets.
If a user is successfully decoded, which can be verified via a cyclic redundancy check, then using the decoded data, the BS performs SIC in all slots in which that user has transmitted a packet~\cite{ref_liva_toc_2011}.\footnote{The set of slots where the user's packet is repeated can be included in the header of the packet.}
This process repeats and the decoding at the BS proceeds in iterations.

We use the SINR threshold model to abstract the decodability of any packet: a packet can be decoded correctly if and only if its SINR is above a threshold $\gamma_{\text{th}}\! \ge\! 1$~\cite{ref_clazzer_icc_2017,ref_srivatsa_chest_tsp_2022}.
To evaluate the performance of C-IRSA with the SINR threshold model, we first compute the SINRs achieved by all the users in all the slots in any decoding iteration.
If there is a user with SINR $\ge\! \gamma_{\text{th}}$ in some slot, we consider that packet as successfully decoded and remove the contribution of that user's packet from all other slots in which that user has transmitted a replica~\cite{ref_khaleghi_pimrc_2017}.
We then proceed to the next decoding iteration and recompute the SINRs for all users yet to be decoded.
This process stops when no additional users are decoded in two successive iterations.
The throughput $\mathcal{T}$ is calculated as the number of correctly decoded unique packets divided by the number of slots.

\begin{algorithm}[t]
\DontPrintSemicolon
\SetAlgoLined
\SetKwInOut{Input}{Input}
\Input{$ T, M, \rho_0, \mathbf{G}, k_{\max},\mathcal{A} = \{ i \in [M]||h_{i}|^2 \geq \nu \}$}
\textbf{Initialize:} \thinspace $ \mathcal{S}_1 = [M] $ \\
\For{$k = 1,2, \ldots, k_{\max}$}{
\For{$t = 1,2,\ldots, T$}{
Evaluate the SINR $\rho_{ti}^k, \ \forall i \in \mathcal{S}_k $ from \eqref{eqn_sinr_init} \\
If $\rho_{ti}^k \geq \gamma_{\text{th}}$,  remove user $i$ from $ \mathcal{S}_k$ and perform SIC in all slots where $g_{ti} = 1$  \\
}%\vspace{-3pt}
}
\textbf{Output:}  $\mathsf{PLR} = |\mathcal{S}_{k_{\max}}|/M, \ \mathcal{T} = M(1 -\mathsf{PLR})/T,$ $\mathsf{PLR}_a = |\mathcal{A} \cap \mathcal{S}_{k_{\max}}|/|\mathcal{A}|.$
% \ \mathcal{T}_a = |\mathcal{A}|(1 -\mathsf{PLR}_a)/T$ 
\caption{Performance Evaluation of C-IRSA}
\label{algo_perf_eval}
\end{algorithm}

%\subsection{Calculation of the SINR}
The calculation of the SINR of the users is as follows.
We define $\mathcal{S}_k$ as the set of users not decoded upto the $k$th iteration with $\mathcal{S}_k^m \triangleq \mathcal{S}_k \setminus \{m\}$ and $\mathcal{S}_1 = [M]$.
We can write the received signal in the $t$th slot in the $k$th decoding iteration as
\begin{align}
y_{t}^k  &=  \textstyle{\sum\nolimits_{i\in \mathcal{S}_k}} \sqrt{\rho_0}  a_{i} g_{ti} h_{i} x_i  + n_{t}. \label{eqn_rx_data}
\end{align}
In order to decode the $m$th user, we first compute the processed signal $\tilde{y}_{tm}^k \triangleq h_{m}^* y_{t}^k$, which can be written as
\begin{align*}
\tilde{y}_{tm}^k \! = \! \sqrt{\rho_0} a_{m} g_{tm} |h_{m}|^2 x_m \! \! + \! \!  \textstyle{\sum\nolimits_{i\in \mathcal{S}_k^m}} \sqrt{\rho_0}  a_{i} g_{ti} h_{m}^*h_{i} x_i  \! + \! h_{m}^*n_{t}, \label{eqn_post_comb}
\end{align*}
where the first term $T_1 \triangleq \sqrt{\rho_0} a_{m} g_{tm} |h_{m}|^2 x_m$ is the desired signal, the second term $T_2 \triangleq \textstyle{\sum\nolimits_{i\in \mathcal{S}_k^m}} \sqrt{\rho_0}  a_{i} g_{ti} h_{m}^*h_{i} x_i$ is the MUI, and $T_3 \triangleq h_{m}^*n_{t}$ is noise.
Since noise is uncorrelated with the other terms and the data streams of distinct users are uncorrelated, the terms $T_1, T_2, $ and $T_3$ are all uncorrelated with each other. 
The power in the received signal is a sum of the powers of the terms.
Thus,  the SINR, $\rho_{tm}^k$, of the $m$th user in the $t$th slot in the $k$th iteration, can be computed as
\begin{equation}
\rho_{tm}^k = \dfrac{\rho_0 a_{m} g_{tm} | h_{m} |^2}{  1  +  \sum\nolimits_{i \in \mathcal{S}_k^m} \rho_0 a_{i} g_{ti} | h_{i} |^2  }. \label{eqn_sinr_init} 
\end{equation}
The performance of C-IRSA can now be computed as detailed in Alg.~\ref{algo_perf_eval}.
Here, the decoding proceeds for $k_{\max}$ iterations, and the output is the system throughput, $\mathcal{T}$, the packet loss rate (PLR) of the active users, $\mathsf{PLR}_a$, and the system PLR, $\mathsf{PLR}$.

\emph{Remarks:} The threshold $\nu$ can be declared by the BS during pilot transmission based on the  system load; the optimal choice of $\nu$ is discussed in the sequel. 
Also, we ignore channel estimation errors in determining whether the user remains active/inactive and in calculating the SINR in \eqref{eqn_sinr_init}. 
However, it is straightforward to include these effects using the results in \cite{ref_srivatsa_chest_tsp_2022}. 
Finally, the BS can determine which users are active in each frame, for example, using the user activity detection (UAD) algorithm presented in~\cite{ref_srivatsa_uad_tsp_2022}. It is shown in \cite{ref_srivatsa_uad_tsp_2022} that a short pilot transmission from the users for channel estimation at the BS is also sufficient for accurate UAD.

%\vspace{-0.3cm}
\section{Theoretical Analysis of C-IRSA} \label{sec_de}
In the previous section, we described an \emph{empirical} approach to evaluate the performance of C-IRSA, given by Alg.~\ref{algo_perf_eval}.
We now characterize the {theoretical} performance of C-IRSA using density evolution (DE)~\cite{ref_liva_toc_2011,ref_srivatsa_chest_tsp_2022}.
SIC-based decoding can be viewed as message passing on a bipartite graph \cite{ref_khaleghi_pimrc_2017}, and thus C-IRSA can be decoded on graphs. 
The bipartite graph is made up of the user nodes on one side, the slot nodes on the other side, and the edges between them.
An edge connects a user node to a slot node if and only if that user has transmitted a packet in that corresponding slot.
DE is applicable as ${M_a}$ and $T \rightarrow \infty $ with a fixed $L_a={M_a}/T$~\cite{ref_clazzer_icc_2017}.
Hence, we describe the DE process in terms of only the active load $L_a$.
Due to lack of space, we only outline the high-level steps in the analysis here. Detailed discussion of the DE technique can be found in several references~\cite{ref_srivatsa_chest_tsp_2022,ref_liva_toc_2011,
ref_clazzer_icc_2017,ref_khaleghi_pimrc_2017}.

The repetition factor of a user is the number of replicas the user has transmitted in a given frame, whereas the collision factor of a slot is the number of packets that have collided in that slot.
The \emph{node-perspective user degree distribution} is the set of probabilities $ \{\phi_d\}_{d=2}^{d_{\max}} $, where $\phi_d$ is the probability that a user has a repetition factor $d$; with minimum and maximum repetition factors of $2$ and $d_{\max}$, respectively.
The \emph{edge-perspective user degree distribution} is the set of probabilities $ \{\lambda_d\}_{d=2}^{d_{\max}} $, where $\lambda_d = d\phi_d / \phi'(1)$ is the probability that an edge is connected to a user with repetition factor $d$.
The corresponding polynomial representations of the node- and edge- perspective user degree distributions are
\begin{align}
\phi (x) = \textstyle{\sum\nolimits_{d = 2}^{d_{\max}}} \phi_d x^d, \ \
\lambda (x) = \sum_{d=2}^{d_{\max}} \lambda_d x^{d-1}, %= \phi'(x) / \phi'(1),
\end{align}
respectively.
The average repetition factor is $\bar{d}  \triangleq \sum_d d \phi_d$.

The degree distributions defined above are now used to find a pair of interdependent \emph{failure probabilities} denoted by ``$p_i$'' and ``$q_i$'' in the $i$th decoding iteration.
The user and slot nodes exchange failure messages along an edge when a decoding failure happens, i.e., when that user has not been decoded in that slot in the current decoding iteration.
The probability that an edge carries a failure message from a slot node to a user node is denoted by~$p_i$, whereas the probability that an edge carries a failure message from a user node to a slot node is denoted by~$q_i$.
Using the edge-perspective user degree distribution, the failure probability $q_i$ is calculated as
\begin{align}
q_i = \textstyle{\sum\nolimits_{d=2}^{d_{\max}}} \lambda_d p_{i-1}^{d-1} = \lambda (p_{i-1}). \label{eqn_de_qi}
\end{align}
Here,  the probability that an edge carries a failure message in the $i$th iteration given that it is connected to a user node with repetition factor $d$ is $p_{i-1}^{d-1}$. 
If all the other $d-1$ incoming edges to that user node carry failure messages in the previous iteration, then the edge will carry a failure message from that user node in the $i$th iteration.
The failure probability $p_i$ is calculated as in~\cite{ref_clazzer_icc_2017,ref_srivatsa_chest_tsp_2022} as
\begin{align}
%p_i &= 1 - e^{-\bar{c} } \textstyle{\sum_{c=1}^\infty \sum_{r=1}^c} (\bar{c} )^{c-1} \dfrac{\theta_r q_i^{r-1}}{(r-1)!} \dfrac{ (1 - q_i)^{c-r}}{(c-r)!}, \nonumber \\
%p_i &= 1 - e^{-L_a\bar{d} q_i} \textstyle{\sum\nolimits_{r=1}^\infty} \theta_r \frac{({L_a\bar{d} q_i})^{r-1}}{(r-1)!} \triangleq f(q_i) . \label{eqn_de_pi}
p_i &= 1 - e^{-L_a\bar{d} q_i} \textstyle{\sum\nolimits_{r=1}^\infty} \theta_r ({L_a\bar{d} q_i})^{r-1}/(r-1)! \triangleq f(q_i) . \label{eqn_de_pi}
\end{align}
Here, $\theta_r$ is the probability that a reference packet gets decoded in any iteration in a slot of degree $r$ using only intra-slot SIC~\cite{ref_clazzer_icc_2017}.
Intra-slot SIC refers to interference cancellation within the same slot a user is decoded in, whereas inter-slot SIC refers to interference cancellation in a different slot.
We now describe the evaluation of $\theta_r$, which is the crucial step in computing the throughput.

\begin{thm} \label{thm_thetar}
For the Rayleigh block-fading channel with an SNR of $\rho_0$, a censor threshold $\nu$, and a decoding threshold $\gamma_{{\text{\rm{th}}}}$, the probability that a reference packet gets decoded in a slot of degree $r$ using only intra-slot SIC, can be obtained as
\begin{align}
\theta_{r} &= {\sum\limits_{k=1}^r} \dfrac{ \exp ( r \nu - (r-k) \nu \bar{\gamma}_{{\text{\rm{th}}}, k} -\rho_0^{-1} (\bar{\gamma}_{{\text{\rm{th}}}, k} - 1)  )}{ r \ \bar{\gamma}_{{\text{\rm{th}}}, k}^{r - (k+1)/2}},
%\theta_{r} &= {\sum\limits_{k=1}^r} \dfrac{ \exp \left( r \nu - (r-k) \nu \bar{\gamma}_{{\text{\rm{th}}}, k} -\rho_0^{-1} (\bar{\gamma}_{{\text{\rm{th}}}, k} - 1)  \right)}{ r \ \bar{\gamma}_{{\text{\rm{th}}}, k}^{r - (k+1)/2}},
\end{align}
where $\bar{\gamma}_{{\text{\rm{th}}}, k} = (1+ \gamma_{{\text{\rm{th}}}})^k$,  and $\nu\leq \rho_0^{-1} \gamma_{{\text{\rm{th}}}}$.
\end{thm}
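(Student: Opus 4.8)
The plan is to reduce the decoding question to a chain of signal-to-interference conditions on the \emph{order statistics} of the channel gains, and then to evaluate the resulting probability through a nested integration that I organize as a one-step recursion. First I would fix the probabilistic setup. Since the gains $|h_i|^2$ are i.i.d.\ unit-mean exponential and every packet in the slot belongs to an active user, I condition each gain on $\{|h_i|^2\ge\nu\}$; by memorylessness of the exponential this conditional law has density $e^{\nu-x}$ on $[\nu,\infty)$, and the $r$ gains in a degree-$r$ slot remain i.i.d. I then observe from \eqref{eqn_sinr_init} that, for a fixed interfering set, the SINR of a packet is strictly increasing in its own gain, so intra-slot SIC peels packets in \emph{decreasing} order of $|h|^2$; the reference packet is resolved iff the peeling reaches its rank, i.e.\ iff the $k$ strongest packets are successively decodable when the reference has rank $k$. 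Writing $A_k$ for the event that the $k$ strongest packets are successively decodable, exchangeability of the gains gives $\theta_r=\tfrac1r\sum_{k=1}^r\Pr(A_k)$ (equivalently, $\theta_r$ is the expected number of decoded packets divided by $r$), which isolates the combinatorial normalization.

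The key algebraic step is to linearize the chain of SINR conditions. Ordering the gains as $\beta_{(1)}\ge\cdots\ge\beta_{(r)}$ and setting $J_j\triangleq\rho_0^{-1}+\sum_{l\ge j}\beta_{(l)}$, the stage-$j$ decodability condition $\rho_0\beta_{(j)}/(1+\rho_0\sum_{l>j}\beta_{(l)})\ge\gamma_{\text{th}}$ is \emph{exactly} $J_j\ge(1+\gamma_{\text{th}})J_{j+1}$. Iterating this telescoping bound is what produces the factor $\bar{\gamma}_{\text{th},k}=(1+\gamma_{\text{th}})^k$. Two simplifications fall out here: because $1+\gamma_{\text{th}}\ge1$ the chain automatically enforces the ordering of the top $k$ gains, and because the hypothesis $\nu\le\rho_0^{-1}\gamma_{\text{th}}$ forces the decodability floor $\gamma_{\text{th}}\rho_0^{-1}$ to dominate the censoring level $\nu$, the $\{|h|^2\ge\nu\}$ constraint is slack for the decoded packets, so the lower integration limits are governed by the SINR chain alone.

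The main obstacle is the evaluation of $\Pr(A_k)$, a nested $r$-dimensional integral of the ordered density $r!\,e^{r\nu}e^{-\sum_i\beta_{(i)}}$ over the coupled region $A_k$. I would integrate the $k$ strongest gains first, treating $Q\triangleq\rho_0^{-1}+\sum_{l>k}\beta_{(l)}$ as a parameter; the nested limits collapse into the recursion $F_k(Q)=\int_{\gamma_{\text{th}}Q}^{\infty}e^{-t}F_{k-1}(Q+t)\,dt$ with $F_0\equiv1$, whose solution is $F_k(Q)=(1+\gamma_{\text{th}})^{-k(k-1)/2}\,e^{-Q(\bar{\gamma}_{\text{th},k}-1)}$. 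The remaining $r-k$ (weakest) gains enter only through $Q$, so integrating them over $\beta_{(k+1)}>\cdots>\beta_{(r)}>\nu$ against $e^{-\sum\beta}$ contributes the symmetric factor $\tfrac1{(r-k)!}\big(e^{-\nu\bar{\gamma}_{\text{th},k}}/\bar{\gamma}_{\text{th},k}\big)^{r-k}$.

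Collecting the powers of $(1+\gamma_{\text{th}})$ from $C_k$ and from the $r-k$ weakest gains yields the exponent $k\big(r-\tfrac{k+1}{2}\big)$ in the denominator; the constant part of $Q$ gives the $\rho_0^{-1}(\bar{\gamma}_{\text{th},k}-1)$ term, and the $\nu$-lower-limits of the weakest gains give the $(r-k)\nu\bar{\gamma}_{\text{th},k}$ term. Substituting into $\theta_r=\tfrac1r\sum_k\Pr(A_k)$ and simplifying the remaining factorial prefactors produces the claimed closed form. I expect the delicate part to be exactly this bookkeeping — setting up the recursion $F_k$ cleanly, and tracking the factorial and exponential constants consistently through the two integration stages so that the two $\nu$-dependent and the two $(1+\gamma_{\text{th}})$-dependent contributions combine into the stated exponent and denominator.
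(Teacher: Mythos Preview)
Your decomposition is exactly the one the paper uses: write $\theta_r=\frac1r\sum_{k=1}^r\theta_{rk}$ with $\theta_{rk}$ the probability that the $k$ strongest packets peel off in order, then evaluate the resulting nested exponential integral. Your telescoping reformulation $J_j\ge(1+\gamma_{\text{th}})J_{j+1}$ and the recursion $F_k(Q)=(1+\gamma_{\text{th}})^{-k(k-1)/2}e^{-Q(\bar\gamma_{\text{th},k}-1)}$ are correct and are a clean way to organise the same computation the paper carries out by direct iterated integration.

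There is, however, a genuine gap in your last paragraph, and it is worth understanding because it mirrors a step the paper glosses over. You compute $\Pr(A_k)$ using the order–statistics density $r!\,e^{r\nu}\prod_ie^{-\beta_{(i)}}$ on the simplex $\beta_{(1)}>\cdots>\beta_{(r)}>\nu$; after integrating out the top $k$ via $F_k$ and symmetrising the bottom $r-k$, the leftover combinatorial factor is $\dfrac{r!}{(r-k)!}$. Dividing by $r$ leaves $\dfrac{(r-1)!}{(r-k)!}$ in front of the $k$th summand, not $\dfrac1r$. These factorials do \emph{not} cancel, so your calculation does not ``simplify to the claimed closed form.'' Concretely, for $r=2$, $\nu=0$ your route gives $\theta_2=\dfrac{e^{-\rho_0^{-1}\gamma_{\text{th}}}+e^{-\rho_0^{-1}\gamma_{\text{th}}(2+\gamma_{\text{th}})}}{1+\gamma_{\text{th}}}$, while the displayed formula in the theorem yields half of that; a quick sanity check (take $\rho_0\to\infty$, $\gamma_{\text{th}}=1$, where both packets are always decodable) shows $\theta_2$ must equal $1$, which your computation gives and the stated formula does not.

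The discrepancy traces back to the paper's ``without loss of generality order the channels'' step: the appendix then integrates $t_{k+1},\dots,t_r$ independently over $[\nu,\infty)$ with the \emph{unordered} i.i.d.\ density $e^{\nu-t}$ and no $r!$ prefactor, which silently drops exactly the $\frac{r!}{(r-k)!}$ you are carrying. So your method is the right one; the step that fails is your assertion that the bookkeeping collapses to the theorem's expression. If you follow your plan to the end you will obtain
\[
\theta_r=\sum_{k=1}^r\frac{(r-1)!}{(r-k)!}\;\frac{\exp\!\big(r\nu-(r-k)\nu\,\bar\gamma_{\text{th},k}-\rho_0^{-1}(\bar\gamma_{\text{th},k}-1)\big)}{\bar\gamma_{\text{th},k}^{\,r-(k+1)/2}},
\]
and you should flag that this differs from the stated result by the factor $\frac{r!}{(r-k)!}$ in each term.
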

\begin{proof}
See Appendix \ref{appendix_theta_r}.
%See supplementary material.
%The proof has been omitted due to space constraints. 
\end{proof}

\noindent \emph{Remark:} When $\nu \! = \! 0$, i.e., there is no censoring, the expression for $\theta_r$ matches with the results by Clazzer et al.~\cite{ref_clazzer_icc_2017}.

In DE, $q_i \!=\! \lambda(p_{i-1})$ and $p_i \!=\! f(q_i)$ are calculated recursively as functions of each other using \eqref{eqn_de_qi} and \eqref{eqn_de_pi}, with either $q_0 \!=\! 1$ or $p_0 \!=\! f(1)$.
At the end of decoding, the failure probability is $p_{\infty} \!=\! \lim_{i \rightarrow \infty} p_i $.
The probability that a packet from a user with repetition factor $d$ does not get decoded at all is $(p_{\infty})^d$. 
Therefore, the asymptotic packet loss rate of the active users ($\mathsf{PLR}_a$), which is the fraction of packets of active users that are not decoded at the BS, is calculated as
\begin{align} 
\mathsf{PLR}_a &=  \phi(p_{\infty}) = \textstyle{\sum\nolimits_{d=2}^{{d_{\max}}}} \phi_d (p_{\infty})^d.
\end{align}
We denote the cumulative distribution function (CDF) and the complementary CDF of the exponential distribution (of $|h|^2\! \sim\! \exp(1)$) evaluated at $x$ by ${\rm F}(x)$ and $\bar{\rm F}(x) \!\triangleq \!1 \!- \!{\rm F}(x)$, respectively.
The active load $L_a$ of the system is $L_a \!=\! L\bar{\rm F}(\nu)$. 
Since the fraction of censored users is ${\rm F}(\nu)$, the effective $\mathsf{PLR}$ of the system (including censored users) can be calculated as 
\begin{align}
\mathsf{PLR} = {\rm F}(\nu) + \bar{\rm F}(\nu)\mathsf{PLR}_a. \label{eqn_plr_plra}
\end{align}
The throughput $\mathcal{T}$ of the users in the system can now be obtained from the asymptotic $\mathsf{PLR}$ as
\begin{align}
\mathcal{T} = L (1 - {\mathsf{PLR}}) = L_a (1 - {\mathsf{PLR}}_a).  \label{eqn_plr_thpt}
\end{align}
The iterations $p_{i} \! = \! f(\lambda(p_{i-1}))$ converge to $p_\infty \!=\! 0$ if the active load $L_a \!< \!L_a^*$, asymptotically~\cite{ref_srivatsa_chest_tsp_2022,ref_liva_toc_2011}.
Here, $L_a^*$ is called the \emph{active inflection load} of the system, and it corresponds to a \emph{system inflection load} of $L^*  \! = \! L_a^*/\bar{\rm F}(\nu)$, with a threshold $\nu$.
For $L_a  \! < \! L_a^*$,  since $p_\infty \!=\! 0$, we have $\mathsf{PLR}_a \!=\! 0$, $\mathsf{PLR}  \! = \! {\rm F}(\nu)$, and $\mathcal{T}  \! = \! L\bar{\rm F}(\nu) \! = \! L_a$. 
For any $L_a \! \geq \! L_a^*$,  $\mathsf{PLR}_a$ does not converge to $0$, and $\mathcal{T}$ decreases monotonically with $L_a$.
Also, from \eqref{eqn_plr_plra}, we see that $\mathsf{PLR} \! \geq  \! {\rm F}(\nu)$, and thus,  $\mathcal{T} \!\leq\! L\bar{\rm F}(\nu)$.

%\vspace{-0.3cm}
\subsubsection{Choice of Threshold}\label{sec_choice_nu}
In order to choose $\nu$, we first choose a target PLR for the active users, ${\sf{PLR}}_{a, \text{tgt}}$, which is a maximum permissible PLR among the active users.
Let $L_{\text{tgt}} $ be the \emph{target load},  which is the minimum $L$ at which the system achieves an active PLR of ${\sf{PLR}}_{a, \text{tgt}}$, with $\nu \!=\! \rho_0^{-1}\gamma_{\text{th}}$.
At $L_{\text{tgt}} $, the active load is $L_a \!=\! L_{\text{tgt}}\bar{\rm F}( \rho_0^{-1}\gamma_{\text{th}})$, with a corresponding throughput of ${\mathcal{T}}_{\text{tgt}}$.
For a load $L \!\geq\! L_{\text{tgt}}$,  we wish to continue to operate at the same PLR of ${\sf{PLR}}_{a, \text{tgt}}$, to keep the throughput fixed at ${\mathcal{T}}_{\text{tgt}}$.
This can be done by ensuring the same active load $L_a$ at $L$ and $L_{\text{tgt}}$.
Thus, we need to choose $\nu$ such that 
\begin{align}
L_a = L\bar{\rm F}(\nu)  = L_{\text{tgt}}\bar{\rm F}( \rho_0^{-1}\gamma_{\text{th}}).
\end{align}
Since $\bar{\rm F}(x) \!=\! \exp(-x)$, we obtain
\begin{align}
\nu = \log(L/L_a) = \log (L/L_{\text{tgt}}) + \rho_0^{-1}\gamma_{\text{th}}.
\end{align}
The above is valid when $L\!\geq\!L_a$ or $L \!\geq\! L_{\text{tgt}}$.
When $L \!<\! L_{\text{tgt}}$, 
as we will see in Fig.~\ref{fig_userc_thpt_vs_activeL}, the threshold that maximizes the throughput
occurs at $\nu \!=\! \rho_0^{-1}\gamma_{\text{th}}$.
An intuitive reason for this is that the probability of decoding a user, if that user was the only one transmitting in a slot, is $\theta_1 \! =  \! \text{Pr} ( |h_{1}|^2 \! \geq \! \rho_0^{-1} \gamma_{\text{th}} \ | \ |h_{1}|^2 \! \geq \! \nu)$  $=\! \exp (\nu \!- \!\rho_0^{-1}\gamma_{\text{th}}) \cdot \mathbbm{1} \{ \nu \!\leq\! \rho_0^{-1}\gamma_{\text{th}} \} + \mathbbm{1} \{ \nu \!>\! \rho_0^{-1}\gamma_{\text{th}} \} $, when the threshold is $\nu$.
So if we set $\nu \!>\! \rho_0^{-1}\gamma_{\text{th}}$ or $\nu \!<\! \rho_0^{-1}\gamma_{\text{th}}$, we are censoring more or fewer users than required, respectively.
Thus, the optimal choice of the censor threshold is given by the function $ g(\cdot, \cdot)$ defined as
\begin{align}
\! \nu = g(L,L_{\text{tgt}}) & \triangleq 
\begin{cases}
\rho_0^{-1}\gamma_{\text{th}}, & \! L < L_{\text{tgt}}, \\
\log (L/L_{\text{tgt}}) + \rho_0^{-1}\gamma_{\text{th}}, & \! L \geq L_{\text{tgt}}.
\end{cases}
\end{align}
For $\nu \!=\! \rho_0^{-1}\gamma_{\text{th}}$, the system inflection load is  $L^* \!=\! L_a^*/\bar{\rm F}( \rho_0^{-1}\gamma_{\text{th}})$.
For $L_{\text{tgt}} \!<\! L^*$, the set of functions $\{ g(\cdot,\cdot) \}$ achieve ${\sf{PLR}}_a\! \leq\! {\sf{PLR}}_{a,\text{tgt}}$ among the set of active users.
In practice, we set a low target PLR of ${\sf{PLR}}_{a,\text{tgt}} \!\approx\! 10^{-3}$ or $10^{-4}$.% to ensure near-zero ${\sf{PLR}}_a$.

%\vspace{-0.3cm}
\section{Numerical Results} \label{sec_results}
In this section, we illustrate the performance of C-IRSA via Monte Carlo simulations. We also show how C-IRSA helps to overcome packet losses both due to poor CSI as well as due to MUI, as the load increases. 
In every simulation, we generate independent realizations of the channels and the access pattern matrix, and empirically evaluate the throughput of C-IRSA using Alg.~\ref{algo_perf_eval}.
We also evaluate the theoretical  throughput of C-IRSA as discussed in Sec.~\ref{sec_de} and provide insights into the impact of various system parameters on the performance.
The results are presented for $10^4$ Monte Carlo runs, SNR $\rho_0 = 10$~dB,  SINR threshold $\gamma_{\text{th}} \!=\! 10$~\cite{ref_srivatsa_chest_tsp_2022}.
We use the truncated Soliton distribution~\cite{ref_narayanan_istc_2012} $\phi(x) \!=\! 0.625x^2 + 0.25 x^3 + 0.125 x^4$ to generate the repetition factors of the users~\cite{ref_srivatsa_chest_tsp_2022}.\footnote{We do not optimize the repetition distribution in this work since the goal is to evaluate the impact of censoring.} 
The repetition factor $d_{i}$ is used to form the access vector for the $i$th user, by uniformly randomly choosing $d_{i}$ slots from $T$ slots without replacement~\cite{ref_liva_toc_2011}.
The packet replicas are transmitted in these $d_{i}$ slots.
For the empirical results, the number of users $M$ is computed based on $L$ as $M \!=\! \lfloor L T \rceil$; whereas the theoretical performance is dependent only on $L$, as described in Sec.~\ref{sec_de}.

\begin{figure}[t]
	\vspace{-0.3cm}
	\centering
	\includegraphics[width=0.5\textwidth]{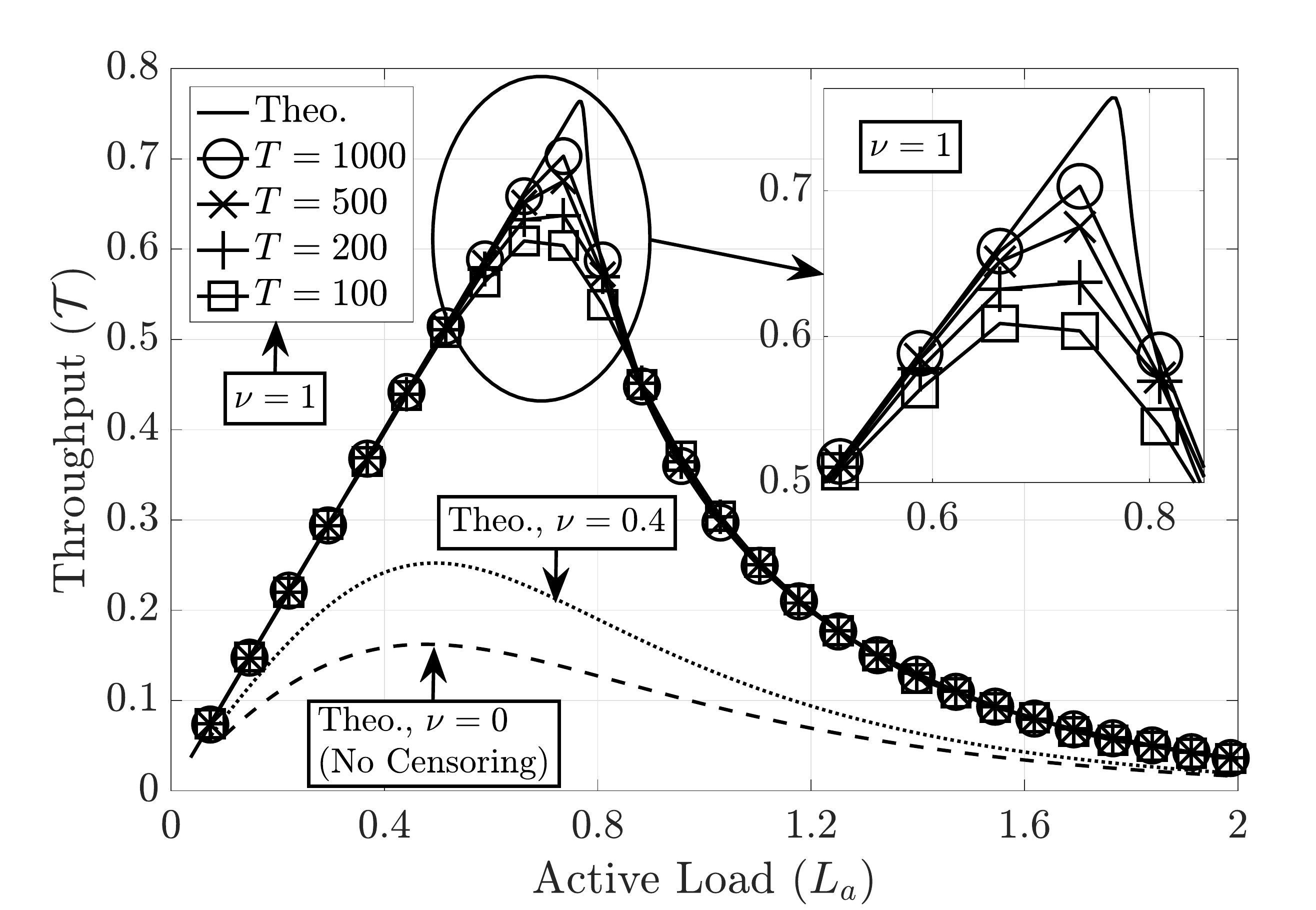}
	\caption{Impact of $T$ on the throughput.}
\label{fig_userc_de_verif}
	\vspace{-0.2cm}
\end{figure}
\begin{figure}[t]
	\vspace{-0.3cm}
	\centering
	\includegraphics[width=0.5\textwidth]{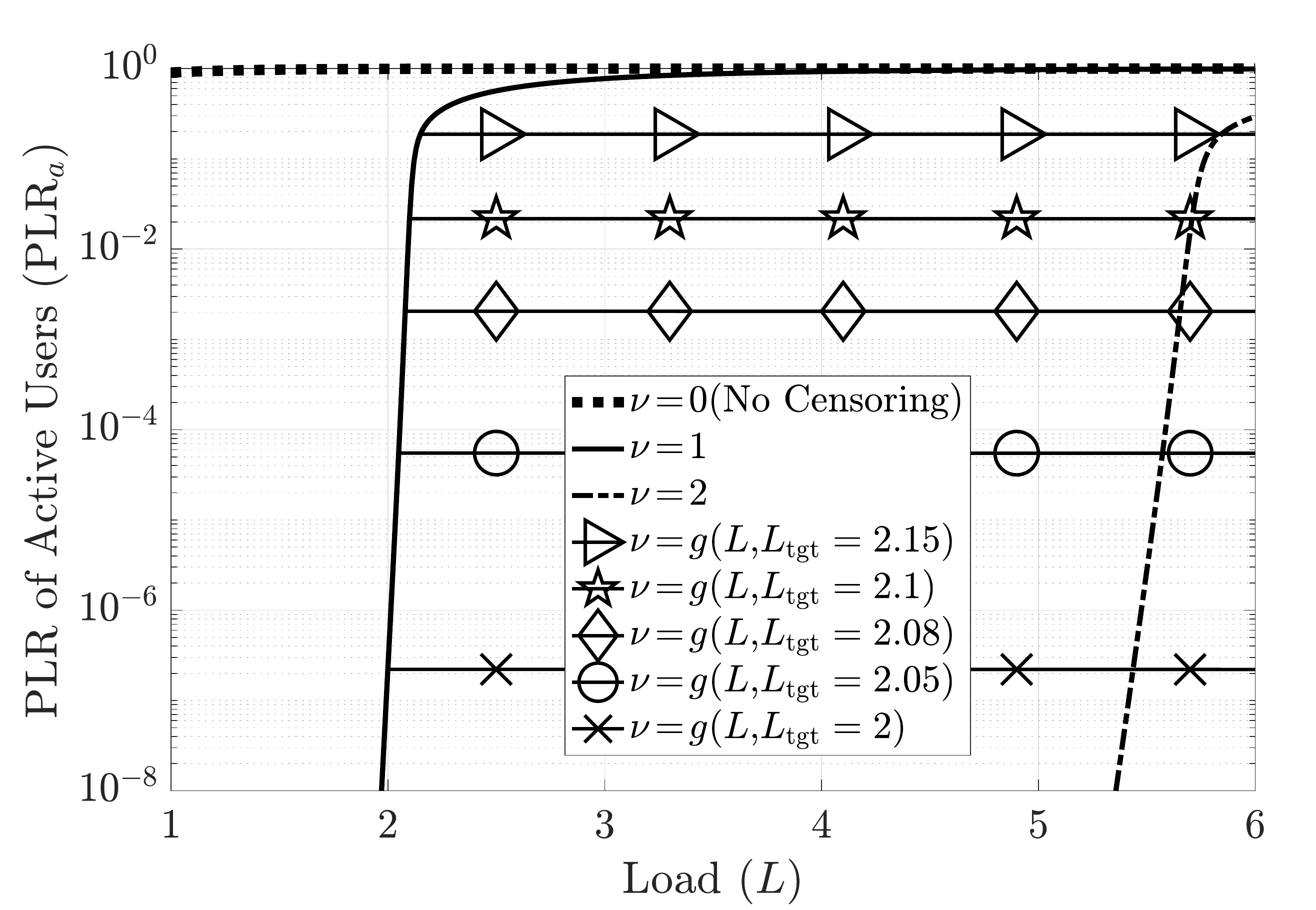}
	\caption{Choice of target load $L_{\text{tgt}}$ using theoretical ${\sf{PLR}}_a$.}
\label{fig_userc_plr_choice_L_tgt}
	\vspace{-0.2cm}
\end{figure}

Fig. \ref{fig_userc_de_verif} shows the impact of $T$ on the empirical throughput with $\nu \!=\! \rho_0^{-1}\gamma_{\text{th}} \!=\! 1$.
The theoretical asymptotic throughput curves for $\nu \!=\! 0, 0.4,$ and $1$, obtained via DE, are also shown.
The curves linearly increase till a peak, after which they drop quickly to zero as the system becomes MUI-limited.
The asymptotic $\mathcal{T}$ is maximized at $L_a^* \!=\! \mathcal{T}\! =\! 0.76$, for $\nu \!=\! 1$.
The linear increase in $\mathcal{T}$ marks the region in which ${\sf{PLR}}_a \!=\! 0$~\cite{ref_liva_toc_2011}: 
when $L_a \!\leq\! 0.76$, all active users are decoded.
Conventional IRSA corresponds to no censoring ($\nu\!=\!0$).
At $L_a\!=\!0.4$,  $\nu\!=\!0$ achieves $\mathcal{T} \!= \!0.15$, whereas $\nu\!=\!1$ achieves full throughput of $\mathcal{T} \!=\! L_a\!=\!0.4$.
The asymptotic throughput dramatically improves as $\nu$ is increased from $0$ to $1$, because users with poor channel states are self-censored.
Even with a little amount of censoring, C-IRSA performs better than IRSA.
Thus, C-IRSA helps overcome packet losses due to both poor CSI and MUI.

We have seen that the choice of the threshold $\nu$ must be such that $L_a \!\leq\! L_a^*$. 
In Fig.~\ref{fig_userc_plr_choice_L_tgt}, we depict the influence of the choice of the target load, $L_{\text{tgt}}$, using the asymptotic active PLR, ${\sf{PLR}}_a$.
The PLR is close to 1 with no censoring.
The ${\sf{PLR}}_a$ of the system increases with $L$ for $\nu\!=\!0,1,$ and $2$, and becomes $1$ at high loads.
The curves with $\nu\!=\!g(L,L_{\text{tgt}})$ follow the performance of $\nu \!= \!\rho_0^{-1} \gamma_{\text{th}} \!=\! 1$ upto a load of $L \!=\! L_{\text{tgt}}$, and beyond that ${\sf{PLR}}_a$ stays constant at every load.
Fixing a ${\sf{PLR}}_{a, \text{tgt}}$ yields the choice of $L_{\text{tgt}}$ and the corresponding threshold $\nu\!=\!g(L,L_{\text{tgt}})$.
The asymptotic PLR increases very quickly around the inflection load $L^*$.
In practice, however, choosing $L_{\text{tgt}} \!=\! 0.9L^*$ or $0.8L^*$ works well.

\begin{figure}[t]
	\vspace{-0.3cm}
	\centering
	\includegraphics[width=0.5\textwidth]{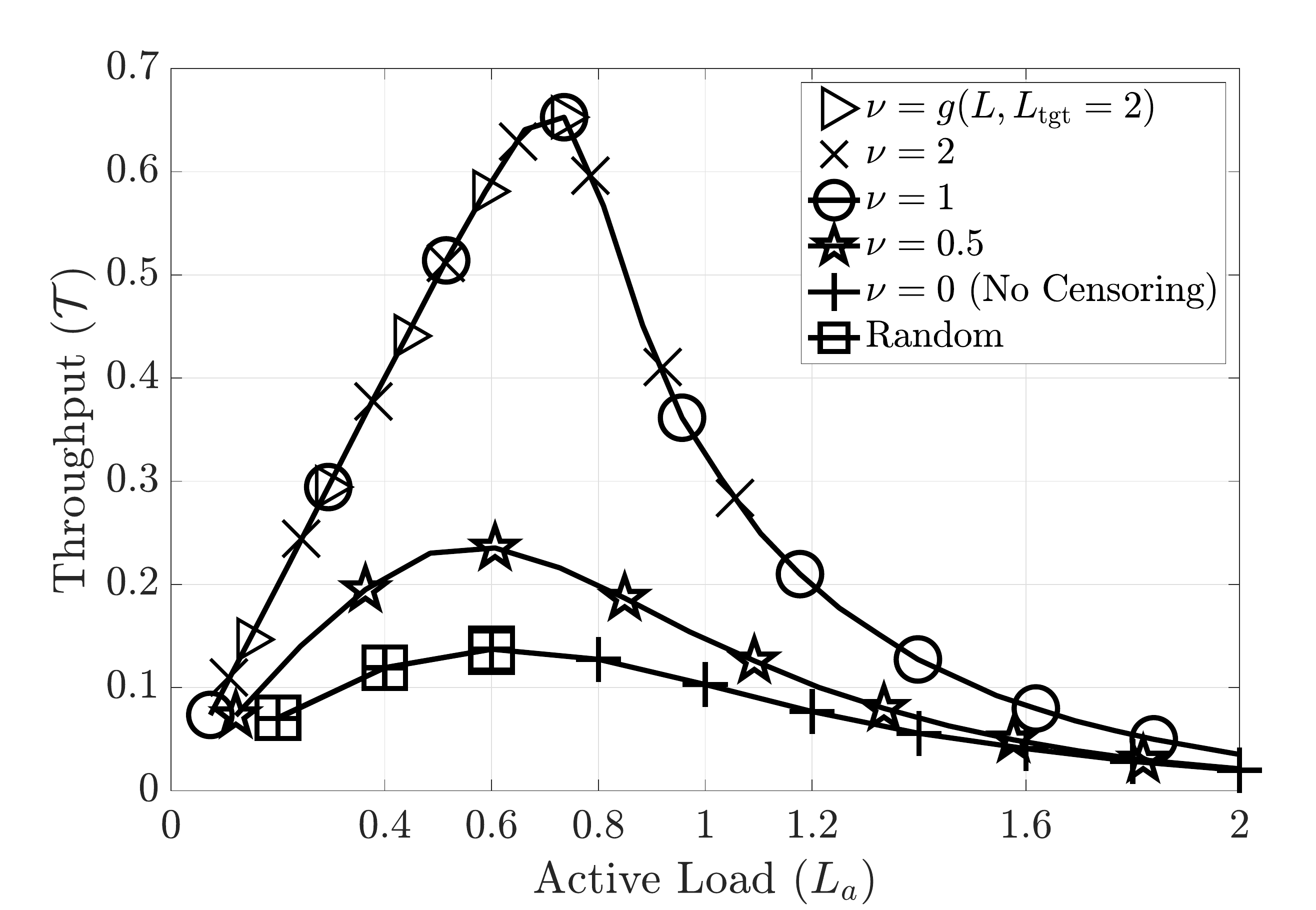}
	\caption{Effect of active load $L_a$ on $\mathcal{T}$.}
\label{fig_userc_thpt_vs_activeL}
	\vspace{-0.2cm}
\end{figure}

In Fig.~\ref{fig_userc_thpt_vs_activeL}, we show the effect of the active load $L_a$ on the empirical throughput $\mathcal{T}$, with $T\!=\!250$.
Conventional IRSA (no censoring, i.e., $\nu\!=\!0$) achieves very low throughputs since the system is highly interference limited.
Similar to the previous plot, where the theoretical throughput increased with increase in $\nu$, the empirical throughput also increases with an increase in from $\nu \!= \!0$ to $\nu \!=\! \gamma_{\text{th}}/\rho_0 \!=\! 1$.
For $\nu \!\geq \!\gamma_{\text{th}}/\rho_0,$ 
the throughput of the system stays constant with respect to the active load and the system achieves the same throughput for $\nu\!=\!2$ as for $\nu\!=\!1$.
From the plot,  we also see that we should choose a threshold $\nu$ such that we always operate the system at active load of $L_a \!\leq \!L_a^* \!= \!0.65$.
Also, by optimally choosing the threshold using $\nu\!=\!g(L,L_{\text{tgt}})$  as described in Sec.~\ref{sec_choice_nu}, we can obtain the same throughput as that obtained with $\nu\!=\!1$.
Note that in the MUI-limited regime, the PLR of IRSA is nonzero, and both users with poor channel states as well as users who collide with many users cannot be decoded correctly.
Censoring improves the performance of the system on both counts by choosing users whose packets are more likely to be decoded correctly as well as reducing the number of collisions.

\begin{figure}[t]
	\vspace{-0.3cm}
	\centering
	\includegraphics[width=0.5\textwidth]{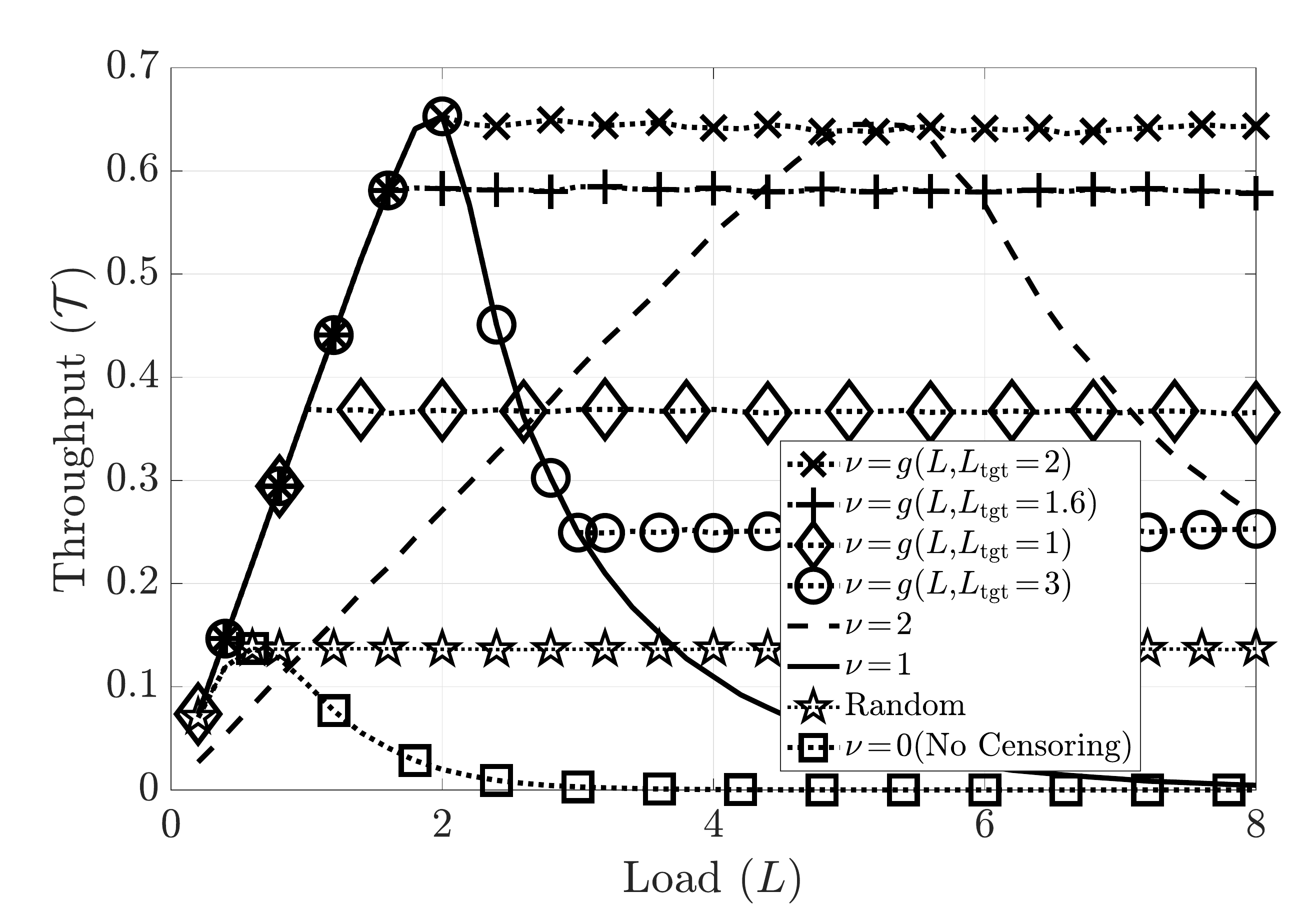}
	\caption{Impact of threshold $\nu$ on $\mathcal{T}$.}
\label{fig_userc_thpt_vs_L}
	\vspace{-0.2cm}
\end{figure}

So far, we have observed that both the theoretical and empirical throughputs are maximized at $\nu \!=\! \rho_0^{-1}\gamma_{\text{th}}$ for every $L_a$.
We now study the effect of censoring and the system load $L$ on the empirical throughput in Fig.~\ref{fig_userc_thpt_vs_L}, with $T\!=\!250$.
With $\nu\!=\!0$, i.e., no censoring, the throughput of IRSA becomes zero at $L\!=\!3$.
With $\nu\!=\!\rho_0^{-1}\gamma_{\text{th}}\!=\!1$, the throughput of the system increases linearly with load upto $\mathcal{T}\!=\!0.65$ at $L\!=\!2$, and beyond that, the throughput drops to zero.
This is also observed with $\nu\!=\!2$, which achieves a peak throughput of $\mathcal{T}\!=\!0.65$ at $L\!=\!5$.
The linearity of the curve upto $L\!=\!5$ indicates that too many users are self-censoring, and we could reduce $\nu$.
For $\nu\!=\!1$, we have ${\sf{PLR}}_a \!=\! 0$ and ${\sf{PLR}} \!=\! \bar{\rm F}(1)$ upto $L\!=\!2$; for $\nu\!=\!2$, we have ${\sf{PLR}}_a \!=\! 0$ and ${\sf{PLR}} \!=\! \bar{\rm F}(2)$ upto $L\!=\!5$.
Thus, we could choose $\nu$ for every $L$ such that we obtain an envelope of all curves for $\nu\!\geq\!1$, which yields the same performance as that of the curve marked $\nu\!=\!g(L,L_{\text{tgt}}\!=\!2)$.\footnote{The theoretical throughputs for Figs.~\ref{fig_userc_thpt_vs_activeL} and \ref{fig_userc_thpt_vs_L} match the above observations. Due to lack of space, we have not included them.  Also, the results are presented for $\rho_0^{-1}\gamma_{\text{th}} \!=\! 1$. The trends are similar for any other $\rho_0^{-1}\gamma_{\text{th}}$, and $\mathcal{T}$ is maximized at $\nu \!=\! \rho_0^{-1}\gamma_{\text{th}}$ for every $L_a$.}
All the curves marked $\nu\!=\!g(L,L_{\text{tgt}})$ follow the performance of $\nu\!=\!1$ upto $L_{\text{tgt}}$, beyond which $\mathcal{T}$ stays constant for every $L$.
Since $L^* \!=\! 2,$ choosing $L_{\text{tgt}} \!=\! 3$ is not preferred since the system is operating at a high PLR.
Choosing $L_{\text{tgt}} \!=\! 1,1.6,$ and $2$ all yield ${\sf{PLR}}_a \!=\! 0$ at all $L$ since the active load $L_a\!\leq\! 0.65$.
We thus choose $L_{\text{tgt}} \!=\!2$ to maximize $\mathcal{T}$, which can be obtained from our analysis as $L_{\text{tgt}} \!=\! L_a^*/\bar{\rm F}(\rho_0^{-1}\gamma_{\text{th}})  \!= \!0.65/\bar{\rm F}(1) \! =\! 2$.
Since the DE curves are achieved asymptotically,  in practice, we back off from $L_{\text{tgt}}$ by 10\% to 20\% to $L_{\text{tgt}} \!=\! 1.8$ or $1.6$, to achieve zero ${\sf{PLR}}_a$ at all $L$.
At high $L$, we see that C-IRSA with $L_{\text{tgt}}\! \leq \!L^*$ operates with $\mathcal{T}\!=\!0.65$, whereas conventional IRSA has $\mathcal{T}\!=\!0$.
Thus, the system can be operated at its maximum potential in C-IRSA compared to vanilla IRSA which has zero throughput.

\subsubsection{Impact of random censoring}
The censoring of users can be done in a random fashion as opposed to CSI-based censoring: users independently participate in each frame with probability $p_a$, and self-censor with probability $1-p_a$. 
%and choose to transmit if the outcome is ``Heads''.
This yields an active load of $L_a \!=\! L p_a$. The optimal random censoring can be done by choosing $p_a \!=\! L_a^*/L$, since this ensures that $L_a \!=\! L_a^*$.
The curve marked ``Random'' in Fig.~\ref{fig_userc_thpt_vs_activeL} uses random censoring and achieves the same throughput as conventional IRSA for every $p_a \!\in\! (0,1]$.
For the same active load $L_a$, the channel states of the uncensored users with CSI-based censoring are better than the channel states of the active users with random censoring.
With optimal random censoring, in order to operate the system at $\mathcal{T}_{\text{tgt}} \!=\! 0.15$ at $L_a^* \!=\! 0.6$, we need to choose $p_a \!=\! \min \{1,0.6/L\}$.
With this choice of $p_a$,  we obtain the curve marked ``Random'' in Fig.~\ref{fig_userc_thpt_vs_L}, which achieves $\mathcal{T} \!=\! 0.15$ at all $L\!\geq\!0.6$.
Thus, the optimal CSI-based censoring in C-IRSA achieves a peak throughput of $\mathcal{T} \!=\! 0.65$ whereas optimal random censoring in IRSA has a peak throughput of $\mathcal{T} \!=\! 0.15$,  an over $4\times$ improvement.

%\vspace{-0.3cm}
\section{Conclusion} \label{sec_conc}
In this work, we proposed a variant of IRSA, called C-IRSA, to overcome the performance degradation of IRSA at high loads. In C-IRSA, users self-censor based on their CSI, and the protocol retains the fully distributed, random access nature of IRSA. 
We derived closed-form expressions for the success probability $\theta_{r}$ for CSI-based censoring, and theoretically characterized the asymptotic performance of C-IRSA.
Our analysis allows us to determine the optimal choice of the censor threshold $\nu$, with which the PLR of the active users can be driven close to zero and yields the highest possible throughput.
The results showed that we can achieve a $4\times$ improvement in C-IRSA compared to optimal random censoring.
Future work could account for CSI and load estimation errors, optimize the repetition distribution under C-IRSA, and also include a proportional fairness mechanism for users with poor CSI.

%\newpage
\appendices
\renewcommand{\thesectiondis}[2]{\Alph{section}:}
%\vspace{-0.3cm}
\section{Proof of Theorem \ref{thm_thetar}} \label{appendix_theta_r}
We now characterize $\theta_{r}$, which is the probability of decoding a reference packet in a \emph{single slot} where $r$ users have transmitted their packets.
Since there is only one slot under consideration,  users are decoded via intra-slot SIC.
The reference packet is one of the $r$ packets, and it gets decoded only if the packets having a higher SINR get successfully decoded first.
Hence, they must also satisfy the SINR $\geq \gamma_{\text{th}}$ constraint.
Thus, $\theta_{r}$ is the probability that the reference packet and the packets with higher SINRs all get decoded. 

We denote the set of active users who have not yet been decoded in the first $k-1$ intra-slot decoding iterations by $\mathcal{S}_{k}$, and $\mathcal{S}_k^m \triangleq \mathcal{S}_k \setminus \{m\}$, with $\mathcal{S}_1 \!=\! [r]$.
The SINR of the $m$th user in the $k$th intra-slot decoding iteration,  $\rho_{m}^k$, is calculated~as $\rho_{m}^k \!=\! \textstyle{ | h_{m} |^2/(\rho_0^{-1} \!+\!  \sum\nolimits_{i \in \mathcal{S}_k^m}^{r} | h_{i} |^2 )}.$
Let $\rho_{\max}^k$ denote the SINR of the user with the highest SINR in the $k$th intra-slot decoding iteration, calculated as $\rho_{\max}^k \!=\! \max_{m \in \mathcal{S}_k} \ \rho_{m}^k$.
%$\rho_{m}^k = \dfrac{  | h_{m} |^2}{  \rho_0^{-1} +  \sum\nolimits_{i=2}^{k} | h_{i} |^2  }.$
Let~$s$ be the index of the intra-slot decoding iteration in which the reference packet is decoded, with $1 \!\leq\! s \!\leq\! r$.
Thus, $\theta_r$ is calculated as $\theta_{r} \!=\! \text{Pr} ( \rho_{\max}^1\! \geq\! \gamma_{\text{th}}, \rho_{\max}^2 \!\geq\! \gamma_{\text{th}},  \ldots,  \rho_{\max}^{s} \!\geq\! \gamma_{\text{th}}  ).$
Since the reference packet is tagged uniformly at random from the users, the reference packet is equally likely to get decoded in any decoding iteration.
We denote the probability that the $k$ packets with the highest SINRs across decoding iterations all exceed the threshold ${\gamma}_{\text{th}}$ by $\theta_{rk} \!\triangleq\! \text{Pr} ( \rho_{\max}^1 \!\geq\! \gamma_{\text{th}}, \rho_{\max}^2 \!\geq\! \gamma_{\text{th}},  \ldots,  \rho_{\max}^{k} \!\geq\! \gamma_{\text{th}}  )$.
We can calculate $\theta_r$ using $\theta_{rk}$ as  $\theta_{r} \!=\!  ( \sum\nolimits_{k=1}^r \theta_{rk})/r.$
Without loss of generality, let the channels of the users be ordered as $| h_{1} |^2 \geq | h_{2} |^2 \geq \ldots \geq | h_{r} |^2$.
Now, 
\begin{align}
\theta_{rk} &= \text{Pr} \left({\dfrac{  | h_{1} |^2}{  \rho_0^{-1} \! + \!  \sum\nolimits_{i=2}^{r} | h_{i} |^2  }\!  \geq \! \gamma_{\text{th}},  \dfrac{  | h_{2} |^2}{  \rho_0^{-1} \! +  \! \sum\nolimits_{i=3}^{r} | h_{i} |^2  } \! \geq \! \gamma_{\text{th}}, }
\right. \nonumber \\ 
& \left. 
\! \ldots,  \!  {\dfrac{  | h_{k} |^2}{  \rho_0^{-1} \! + \!  \sum\nolimits_{i=k+1}^{r} | h_{i} |^2  } \! \geq \! \gamma_{\text{th}} \bigg| |h_j|^2 \geq \nu,  \forall j \in [r] \!} \right) \! . \label{eqn_theta_rk11}
%\theta_{rk} &= \text{Pr} ( \rho_{\max}^1 \geq \gamma_{\text{th}}, \rho_{\max}^2 \geq \gamma_{\text{th}},  \ldots,  \rho_{\max}^{k} \geq \gamma_{\text{th}} \!  ) \! . \label{eqn_theta_rk11}
\end{align} 
The above is a conditional probability, conditioned on $|h_{j}|^2\geq \nu$,  since we are considering only uncensored users.
Thus,  $\theta_{rk}$ from \eqref{eqn_theta_rk11} can be calculated equivalently as
\begin{align*}
& \theta_{rk} = \text{Pr} ( t_{1}   \geq \gamma_{\text{th}} (\rho_0^{-1} + \textstyle{\sum\nolimits_{i=2}^{r}}  t_{i} ),   t_{2}   \geq \gamma_{\text{th}} (\rho_0^{-1} \! + \!  \textstyle{\sum\nolimits_{i=3}^{r}}  t_{i} ),  
\nonumber 
\\ 
& \qquad \qquad \qquad
 \ldots,   t_{k}   \geq \gamma_{\text{th}} (\rho_0^{-1} \! + \!  \textstyle{\sum\nolimits_{i=k+1}^{r}}  t_{i} )  ). \! \nonumber 
\end{align*}
Here, $t_{i} $ is a random variable follows a \emph{truncated exponential} distribution with the density function ${\rm f}(t) = \exp ( \nu - t) \cdot \mathbbm{1}\{ \nu \leq t < \infty \}.$
Assuming $\nu\leq \rho_0^{-1} \gamma_{{\text{\rm{th}}}}$,  with $\bar{\gamma}_{{\text{\rm{th}}}, i} = (1+ \gamma_{{\text{\rm{th}}}})^i$, $ \theta_{rk}$ can be calculated as
\begin{align}
%& \theta_{rk} = \text{Pr} ( t_{1}   \geq \gamma_{\text{th}} (\rho_0^{-1} + \textstyle{\sum\nolimits_{i=2}^{r}}  t_{i} ),   t_{2}   \geq \gamma_{\text{th}} (\rho_0^{-1} \! + \!  \textstyle{\sum\nolimits_{i=3}^{r}}  t_{i} ),  
%\nonumber 
%\\ 
%& \qquad \qquad \qquad
% \ldots,   t_{k}   \geq \gamma_{\text{th}} (\rho_0^{-1} \! + \!  \textstyle{\sum\nolimits_{i=k+1}^{r}}  t_{i} )  ) \! \nonumber 
%\\
& \theta_{rk} = e^{{r\nu}} \int_{\nu}^{\infty} \! \! e^{-t_{r}} {\rm d}t_{r} \int_{\nu}^{\infty} \! \! e^{-t_{r-1}} {\rm d}t_{r-1} \cdots \int_{\nu}^{\infty} \! \! e^{-t_{k+1}} {\rm d}t_{k+1} \nonumber 
\\
%& \ \ \ \ \times  \int_{\gamma_{\text{th}} (\rho_0^{-1} \! + \!  \small{\sum\nolimits_{i=k+1}^{r}}  t_{i} )}^{\infty} \!\! e^{-t_{k}} {\rm d}t_{k}   \int_{\gamma_{\text{th}} (\rho_0^{-1} \! + \!  \small{\sum\nolimits_{i=k}^{r}}  t_{i} )}^{\infty} \!\! e^{-t_{k-1}} {\rm d}t_{k-1} \nonumber \\
%& \ \ \ \ \times  \cdots   \int_{\gamma_{\text{th}} (\rho_0^{-1} \! + \!  \small{\sum\nolimits_{i=2}^{r}}  t_{i} )}^{\infty} \!\! e^{-t_{1}} {\rm d}t_{1} \nonumber \\
& \ \ \ \ \times  \int_{\gamma_{\text{th}} (\rho_0^{-1} \! + \!  {\sum\nolimits_{i=k+1}^{r}}  t_{i} )}^{\infty} \!\! e^{-t_{k}} {\rm d}t_{k}  \cdots   \int_{\gamma_{\text{th}} (\rho_0^{-1} \! + \!  {\sum\nolimits_{i=2}^{r}}  t_{i} )}^{\infty} \!\! e^{-t_{1}} {\rm d}t_{1} \nonumber 
\\
& \ \ =  \dfrac{ \exp ( r \nu - (r-k) \nu \bar{\gamma}_{{\text{th}}, k} -\rho_0^{-1} \gamma_{\text{th}} (\sum_{i=1}^{k} \bar{\gamma}_{{\text{\rm{th}}}, i-1} )  )}{\bar{\gamma}_{{\text{th}}, k}^{r - (k+1)/2}}.
\end{align} 
Thus, we get
\begin{equation} 
\theta_{r} \! = \! \textstyle{\sum\limits_{k=1}^r} \dfrac{ \exp ( r \nu - (r-k) \nu \bar{\gamma}_{{\text{\rm{th}}}, k} -\rho_0^{-1} (\bar{\gamma}_{{\text{\rm{th}}}, k} - 1)  )}{ r \ \bar{\gamma}_{{\text{\rm{th}}}, k}^{r - (k+1)/2}}.
\end{equation} 

%\newpage
%\nocite{*}
%\vspace{-0.2cm}
\vspace{0.3cm}
\bibliographystyle{IEEEtran}
\bibliography{IEEEabrv,my_refs}

% Generated by IEEEtran.bst, version: 1.14 (2015/08/26)
\begin{thebibliography}{10}
\providecommand{\url}[1]{#1}
\csname url@samestyle\endcsname
\providecommand{\newblock}{\relax}
\providecommand{\bibinfo}[2]{#2}
\providecommand{\BIBentrySTDinterwordspacing}{\spaceskip=0pt\relax}
\providecommand{\BIBentryALTinterwordstretchfactor}{4}
\providecommand{\BIBentryALTinterwordspacing}{\spaceskip=\fontdimen2\font plus
\BIBentryALTinterwordstretchfactor\fontdimen3\font minus
  \fontdimen4\font\relax}
\providecommand{\BIBforeignlanguage}[2]{{%
\expandafter\ifx\csname l@#1\endcsname\relax
\typeout{** WARNING: IEEEtran.bst: No hyphenation pattern has been}%
\typeout{** loaded for the language `#1'. Using the pattern for}%
\typeout{** the default language instead.}%
\else
\language=\csname l@#1\endcsname
\fi
#2}}
\providecommand{\BIBdecl}{\relax}
\BIBdecl

\bibitem{ref_chen_jsac_2021}
X.~Chen, D.~W.~K. Ng, W.~Yu, E.~G. Larsson, N.~Al-Dhahir, and R.~Schober,
  ``Massive access for {5G} and beyond,'' \emph{{IEEE} J. Sel. Areas Commun.},
  vol.~39, no.~3, pp. 615--637, 2021.

\bibitem{ref_liva_toc_2011}
G.~{Liva}, ``Graph-based analysis and optimization of contention resolution
  diversity slotted {ALOHA},'' \emph{{IEEE} Trans. Commun.}, vol.~59, no.~2,
  pp. 477--487, February 2011.

\bibitem{ref_srivatsa_chest_tsp_2022}
C.~R. Srivatsa and C.~R. Murthy, ``On the impact of channel estimation on the
  design and analysis of {IRSA} based systems,'' \emph{{IEEE} Trans. Signal
  Process.}, vol.~70, pp. 4186--4200, 2022.

\bibitem{ref_srivatsa_uad_tsp_2022}
------, ``User activity detection for irregular repetition slotted aloha based
  {MMTC},'' \emph{{IEEE} Trans. Signal Process.}, vol.~70, pp. 3616--3631,
  2022.

\bibitem{ref_paolini_tit_2015}
E.~{Paolini}, G.~{Liva}, and M.~{Chiani}, ``Coded slotted {ALOHA}: A
  graph-based method for uncoordinated multiple access,'' \emph{{IEEE} Trans.
  Inf. Theory}, vol.~61, no.~12, pp. 6815--6832, Dec 2015.

\bibitem{ref_khaleghi_pimrc_2017}
E.~E. {Khaleghi}, C.~{Adjih}, A.~{Alloum}, and P.~{Muhlethaler}, ``Near-far
  effect on coded slotted {ALOHA},'' in \emph{Proc. PIMRC}, Oct 2017.

\bibitem{ref_clazzer_icc_2017}
F.~{Clazzer}, E.~{Paolini}, I.~{Mambelli}, and C.~{Stefanovic}, ``Irregular
  repetition slotted {ALOHA} over the {R}ayleigh block fading channel with
  capture,'' in \emph{Proc. ICC}, May 2017.

\bibitem{ref_srivatsa_spawc_2019}
C.~R. Srivatsa and C.~R. Murthy, ``Throughput analysis of {PDMA/IRSA} under
  practical channel estimation,'' in \emph{Proc. SPAWC}, July 2019.

\bibitem{ref_srivatsa_spawc_2022}
------, ``Performance analysis of irregular repetition slotted aloha with
  multi-cell interference,'' in \emph{Proc. SPAWC}, July 2022.

\bibitem{ref_choi_wcl_2021}
J.~Choi and J.~Ding, ``Network coding for {$K$}-repetition in grant-free random
  access,'' \emph{{IEEE} Wireless Commun. Lett.}, vol.~10, no.~11, 2021.

\bibitem{ref_ding_cscn_2021}
J.~Ding and J.~Choi, ``{SIC} aided {$K$}-repetition for mission-critical {MTC}
  in cell-free massive {MIMO},'' in \emph{Proc. CSCN}, 2021.

\bibitem{ref_narayanan_istc_2012}
K.~R. {Narayanan} and H.~D. {Pfister}, ``Iterative collision resolution for
  slotted {ALOHA}: An optimal uncoordinated transmission policy,'' in
  \emph{Proc. ISTC}, Aug 2012, pp. 136--139.

\end{thebibliography}
%\bibliography{my_refs}

\end{document}